\newtheorem{theorem}{Theorem}[section]
\newtheorem{lemma}[theorem]{Lemma}
\newenvironment{proof}[1][Proof]{\begin{trivlist}
\item[\hskip \labelsep {\bfseries #1}]}{\end{trivlist}}
  \thanks[sfn]{University of Lyon, INRIA, France, INSA Lyon, CITI, F-69621, France}%
\begin{document}
\RRNo{7978}
\makeRR   % cas d'un rapport de recherche
%% \makeRT % cas d'un rapport technique.
%% a partir d'ici, chacun fait comme il le souhaite
\section{Introduction}

A WSN is composed of nodes deployed in an area in order to monitor parameters of the environment. Those nodes are able to send information to dedicated nodes called sinks, without the need of a fixed network infrastructure, in a multi-hop fashion. Every node is able to forward messages from the other nodes. They usually run on batteries so they should consume as little energy as possible in order to increase the network lifetime. Because WSNs can contain thousands of nodes, the cost of a node should be as low as possible, this leads to design nodes with poor capabilities (computation, radio, memory, etc...). In the past few years WSNs have been a very active research field which has led to interesting contributions at all communication layers. This is due to the great expectations put in WSN applications. In fact, lots of applications have been proposed in the literature, such as volcano monitoring~\cite{tan10}, air pollution monitoring~\cite{khedo10}, landslide detection~\cite{ramesh09} and so on.

Due to the previously mentioned characteristics of WSNs, network protocols have been designed mainly in order to reduce energy consumption and to provide autonomous network mechanisms. Nevertheless some applications need more than these characteristics. Indeed, critical applications require more reliability  and the respect of time constraints. For instance the aforementioned landslide detection application should give guarantees on the delivery of alert messages. Protocols which can deliver messages with guaranteed end-to-end delay are called real-time protocols. They are usually classified into two categories, soft real-time and hard real-time: in the first case, some messages can miss the deadline with no consequences (video) while in the second case, the delay constraint should be always respected
whatever the circumstances are because of the possible impact on human life, on the environment or on the
financial cost. Due to the probabilistic nature of the radio links in WSNs, strict time constraint is not acheivable, thus the time bound must be associated to a given reliability. This parameter is thus a main concern in the design of a WSN real-time protocol.

Hard real-time constraints cannot be met with the current WSN protocols of the literature either because of their lack of determinism which implies unbounded delays or low reliability, or because they do not take into account the aforementioned characteristics of WSNs.

In this paper we propose, a new localized real-time cross-layer protocol, RTXP. This protocol aims at giving a bound on the end-to-end delay in a WSN. In order to handle real-time requirements, deterministic mechanisms must be introduced at MAC and routing layers. The interactions between these two layers must also be carefully controlled in order to avoid unexpected and unbounded delays. We thus claim that a cross-layer design where MAC and routing layers share information should be preferred. Our approach is to bound the duration of one hop \footnote{We define the duration of one hop to be the time needed for a node to access the medium and send a packet} and the number of hops to reach the sink. To avoid unbounded delays and unbounded route lengths, the access to the medium and the choice of the forwarder must be deterministic. Our approach is based on a suitable Virtual Coordinate System (VCS) \cite{Mouradian12}. This VCS allows the nodes to get information on their distance to the sink in number of hops. It also differentiates nodes having the same hop-counts in order to improve the forwarder selection. Finally, it gives a unique identifier to the nodes in a 2-hop neighborhood in order to deterministically access the medium. The VCS is constructed with local information (the neighbors of a node) and it is the only information used by RTXP. Our proposition is thus localized, no global view of the system is needed, the approach is therefore more scalable than centralized solutions.

Under harsh channel conditions, no hard real-time guarantee can be given whatever the protocol used, because a message may need an infinite number of retransmissions to be correctly transmitted (even if the probability of this event is low). Indeed, even if the protocol is deterministic, the radio link introduces probabilities. In the paper we show that a deterministic protocol allows to achieve better performances (notably reliability) than non real-time solutions.

In section 2 we discuss the advantages and drawbacks of existing WSNs MAC and routing protocols for real-time applications. In section 3 we introduce the hypothesis and the requirements of our solution.
In section 4 we present the details of our proposition, RTXP. In section 5 we give theoretical bound on the end-to-end delay and the capacity of the protocol. Section 6 presents simulation parameters and results, we compare RTXP to a centralized solution and a non real-time protocol. In section 7, we conclude on the protocol properties and performances and we present our future works.

%------------------------------------------------------------------------- 
\section{Related work}
WSN MAC and routing protocols were widely investigated during the last years. Unfortunately only few contributions were focused on real-time protocols. In this section we discuss the main results for MAC only, routing only and cross-layer protocols.

\subsection{Medium Access Control}
\label{mac}
To save energy, MAC protocols for WSNs usually use a duty cycle ~\cite{polastre04} mechanism to save energy. Since the receiving, sending and listening energy costs are approximately the same for usual radio chips~\cite{cc2500}, the only way to save energy is to turn off the radio (e.g. to switch to sleep mode). Duty-cycling consists in nodes alternately waking up and going to sleep mode.
MAC protocols can be classified into two main categories: synchronous and asynchronous. In synchronous protocols, the nodes know the schedules of the wakeup of other nodes (in their neighborhood or in the whole network). Usually a mechanism is used to synchronize the clocks of the nodes. They thus share a common global or local clock. In asynchronous protocols, the synchronization exists but it is event-based. A communicating node and its neighborhood synchronize only for the time of a communication but without exchanging the values of their clocks. In both cases, the medium can be accessed either by contention or in a deterministic way.
%todo: mettre des figures pour la synchro

BMAC~\cite{polastre04} illustrates a classical asynchronous protocol for WSNs. First, nodes pick a random wakeup time and then alternately sleep and wakeup. BMAC uses LPL (Low Power Listening) method which consists in listening periodically to the channel for radio activity. When a node needs to send a message, it sends a preamble (sequence of bits) which duration is equal to the duty-cycle period. Each node, when it wakes up, senses the channel, if it detects the preamble it stays awake until the end of the communication in order to receive the packet, otherwise it goes back to sleep. The main advantage of this protocol is that there is no need for a time synchronization algorithm. Nevertheless the length of the preamble and the overhearing problem (nodes listening the preamble even if they are not the destination of the packet) can leads to high energy consumption. Enhancements have been propose by XMAC~\cite{buettner06} and MFP~\cite{bachir06} for instance. They aim at reducing energy consumption by avoiding unnecessary listening.

SMAC~\cite{ye04} is a synchronous protocol. All the nodes in a neighborhood synchronize in order to wake up at the same time and access the medium by contention using carrier sense and RTS/CTS mechanisms. D-MAC~\cite{lu04} enhances this mechanism by scheduling the wakeup time in function of the distance to the sink. This reduces the end-to-end delay of the packets in the network.

%cooperative MAC with network coding may be able to enhance the reliability of the MAC for WSN.

In the previously cited protocols the channel access is not deterministic (i.e. collisions can occur). It implies that the time to access the medium is not guaranteed. This leads to unbounded delay to perform one hop which is not suitable for real-time applications.

Solutions that provide deterministic access to the medium have been proposed, such solutions allow to respect real-time constrains.

IEEE 802.15.4 \cite{802.15.4} is a standard which defines a physical and MAC layer for WSNs. Networks can be peer-to-peer or star networks. In each case, at least one node acts as a coordinator and sends synchronization beacons. Between beacons, a superframe is defined. It is composed of two parts, Contention Access Period (CAP) and Contention Free Period (CFP). In the CFP, the access are guaranteed allowing real-time communications. This feature is used in the ISA100.11a \cite{isa}, standard for wireless systems for industrial automation. 
Nevertheless, scalability \cite{yedavalli08} and reliability \cite{anastasi11} issues have been highlighted.
%Nevertheless, the hierarchical approach used in the standard  is not suited to very large scale networks (because maintaining the hierarchical structure is costly).

I-EDF~\cite{caccamo02} proposes an approach based on Early Deadline First scheduling algorithm. I-EDF is a synchronous protocol. The network is divided in hexagonal cells which use different radio frequencies. There are two kinds of communications, the intra-cell communications and the inter-cell communications. For intra-cell communications, all the nodes of a cell know the periodicity and length of the packets of every other node, in the same cell, so they can apply the EDF algorithm. Between two intra-cell communications there are slots reserved for inter-cell communications. There are 6 directions in which a node can emit (corresponding to the six edges of the hexagonal cell) each inter-cell slot is allocated to a direction (each router emitting on a different frequency which corresponds to the one of its neighbor in the current direction). This proposition allows real-time communications but does not take into account the energy consumption (there is no duty-cycle). Moreover the nodes must have the capacity of communicating on 7 channels and the hexagonal cell topology is restraining.

On the contrary f-MAC~\cite{roedig06} proposes a localized and asynchronous approach. The principle is that nodes periodically send small packets (called frames) with a dedicated period, each node in the neighborhood having a unique period attributed. The authors show that, by applying mathematical rules for the choice of  the periods, it can be guaranteed that a frame of each node will actually be transmitted without collision. This MAC guarantees hard real-time constraints on perfect radio links and the transmission mechanisms are very simple. Nevertheless it has very a poor channel utilization, a quite high energy consumption (no duty-cycle) and the maximum delay increases exponentially with the number of nodes in the same collision domain.

Dual-MAC \cite{watteyne06} is a hard real-time MAC for linear WSNs. Thanks to the linear topology of the WSN, the routing problem has not to be taken into account. At initialization the network is divided into cells. The protocol can operate in two modes. The ``unprotected mode'' allows a near-optimal transmission time when there is no collision. When a collision occurs, a ``protected mode'' is triggered in which a path is reserved in order to guarantee bounded transmission times. The furthest node from the sender is elected as the packet relay. The two modes allow the protocol to provide a good trade-off between average performances and respect of strict time constraints. The worst case is computed and the protocol is formally verified with a model checker. Nevertheless the solution is only suited for linear networks and does not take into account energy consumption issues in the design of the protocol.

The MAC protocols described in this paper do not allow to respect strict time constraints (hard real-time) while taking into account the previously cited requirements specific to WSNs. The propositions which allow to respect hard time constraints do not take into account energy consumptions issues, radio chip limitations, or are difficult to integrate with a routing layer. Other propositions are more suited to WSNs but do not allow to respect strict time constraints.

\subsection{Routing}
In this section we focus on routing protocols for WSNs. They can be classified into four categories: probabilistic, hierarchical, location-based and broadcast-based.

In probabilistic routing protocols, such as Random Walk Routing \cite{Servetto02}, forwarders are elected by making random choices. This class of protocols cannot be used for real-time communications because of its lack of determinism. Indeed, it leads to unbounded routes length which do not allow to provide a bound on end-to-end delay.

In hierarchical protocols nodes can be grouped into clusters (for example I-EDF \cite{caccamo02} uses a cluster hierarchy) or organized as trees. In the former case, a leader called cluster head is a node which is responsible for collecting the data sensed by the nodes of the cluster and sending it to the sink. In this case the length of the route is bounded, it could thus be used for real-time communications. In the latter case, a tree representing the network is constructed. RPL \cite{rpl} is a routing protocol for IPv6 in low-power and lossy networks standardized by the IETF. It uses Directed Acyclic Graph as routing structure along with several routing metrics. The main issue with hierarchical schemes, is that maintaining the structure can be expensive in terms of energy consumption in highly dynamic networks. Moreover, in the case of a node failure many nodes may result disconnected from the sink.

Location-based  protocols are making forwarding decisions depending on the geographic location of the destination of the packet. A method for choosing a forwarder is to elect the neighbor of the sender which is closer to the sink. This mechanism is called greedy forwarding and has good performances in WSNs~\cite{karp00}. The distance can be geographic distance~\cite{karp00} but \cite{watteyne07} shows that lack of accuracy of the geographic coordinates leads to bad performances. Furthermore GPS chips cost is high and the energy consumption is not negligible. VCSs have been developed in order to address those issues. \cite{rao03} and \cite{cao04} propose two different types of VCS, greedy routing is then used with the coordinates. With such a technique the number of hops is not known a priori. Nevertheless a bound on the number of hops could be deduced from the position of the furthest node from the sink. This solution is however only applicable in dense networks. Indeed, a hole in the network can induce a longer route than expected \cite{karp00}. This can be an issue in the case of real-time communications because we cannot bound the hop-count of a packet and thus the time it takes to reach the sink. SPEED~\cite{Stankovic03} is a routing protocol based on geographic coordinates. A node keeps a table of its neighbors with a metric that represents their speed. The speed of a neighbor is computed by dividing the advance in geographic distance it provides in direction of the destination, by the delay to forward the packet to that neighbor. The forwarder is selected if its speed allows to respect the deadline. SPEED does not bound the end-to-end delay. Nevertheless, it provides a congestion detection mechanisms. MMSPEED~\cite{Felemban06} increases the reliability of SPEED by using a multi-path scheme. RPAR \cite{chipara06} enhences SPEED and MMSPEED by taking into account energy consumption and lossy radio links.

In broadcast-based routing protocols, a node does not need to store explicit information on the network topology. It broadcasts the message and the choice of the next hop is done by nodes which receive it. The choice is based on a metric that can depend on the coordinates (geographic or virtual) and other parameters of the potential forwarders or it can be done randomly. GRAB \cite{ye04} can be classified in this category of routing protocols. GRAB introduces a cost-field which can be seen as a VCS or a metric, indeed the cost-field represents the cost for a node to reach the sink. In GRAB the hop-count is used as a cost-field. During the initialization phase each node is assigned its distance to the sink, in number of hops, as coordinate. Then packets are routed using gradient-routing which consists in choosing the forwarder which has the lowest cost-field value. As many nodes with the same hop-count can listen the packet, the selection of the forwarder can be based on a random value and multiple forwarders can be elected, this induces multiple paths. The advantages of such a solution are that the number of hops to reach the sink is known and multiple path leads to more reliability. Nevertheless GRAB does not give information on the physical organization of nodes with the same hop-count. This information could be useful in order for example to select the best forwarder for a packet in a deterministic way (in the case of greedy routing it is the nearest from the sink), it could be used as well to access the channel in a X-layer scheme. SGF \cite{Huang09} and LQER \cite{Chen08} propose similar schemes. In SGF only one node is chosen in an opportunistic manner. LQER add information on the link quality. Both solutions suffer from the same aforementioned drawbacks of GRAB.

Among the cited routing protocols, some take into account the time in order to route packets \cite{Stankovic03}, others may allow to bound the length of a route \cite{ye04} or to provide reliable end-to-end communication. Nevertheless, none is able to guarantee the respect of real-time constraints.

\subsection{Cross-layer}
\label{centralized}
Solutions which integrate both MAC and routing mechanisms have been proposed. These solutions allow to plan routes and medium access simultaneously.

PR-MAC~\cite{chen07} is a synchronous real-time MAC and routing protocol for WSNs. The aim is to detect events and then to set up a periodical monitoring of the area where the event occurred. When an event is sensed in the network an alarm is sent to the sink. The sink responds with a packet that reserves a path for the periodical monitoring. The nodes on the path then wake up two times, once for the traffic from the source to the sink and once for the traffic from the sink to the source. The path is reserved with a given radio frequency. Once the path is reserved the monitoring packets are transmitted in real-time but the reservation phase is non real-time and induces an overhead. Moreover the protocol assumes the radio handles multi-channel communications.

TSMP~\cite{pister08} uses a MF-TDMA scheme to access the medium. It uses a centralized scheduling, where time-slots and channels are assigned to nodes in order to avoid interferences. The sink produce the scheduling, it is sent to the nodes and executed.

PEDAMACS~\cite{ergen06} also uses such a scheme but with only one radio channel. Nodes have different transmission powers. The sink can reach all the nodes in the network, the other nodes have two transmission powers: one to communicate and one to identify their interferers. The protocol needs a global synchronization of the network. This is achieved thanks to synchronization packets that are sent by the sink to the whole network. The protocol consists in three phases. In the first one, the  topology learning phase, each node learns its interferers and neighbors by sending hello packets in contention periods. During the second phase, the topology collection phase, the information is sent to the sink using a contention mechanism. A schedule is computed by the sink and sent to the nodes. The method used to produce the schedule is to linearize the graph of the network (containing the interference edges) and to give the same color to non interfering levels. The slots are allocated to non-interfering sets of nodes with the same color. During the third phase the nodes communicate in their allocated slots.

A drawback of centralized protocols is that the sink needs to retrieve information on the topology which is not scalable and can lead to high energy consumption and memory issues. Further in this paper we study the differences between PEDAMACS and our proposition mechanisms and the trade-offs those differences underline. We choose PEDAMACS because to the best of our knowledge it is the only protocol that integrates MAC and routing layers being able to guarantee real-time delivery of packets with the hypothesis that there is only one channel available and taking into account energy issues. Moreover, it gives us the opportunity to compare centralized and localized real-time protocols.

We also compare our solution with a nondeterministic solution: XMAC with a gradient routing scheme. It allows to highlight the better reliability of deterministic solutions even with probabilistic radio links.

In the previous paragraphs we present different solutions to schedule the access to the channel and route packets in WSNs. Nevertheless, none defines both MAC and routing layers that can ensure bounded end-to-end delays while being localized, thus more scalable, and energy efficient. Our solution is based on a deterministic and localized access scheme and on greedy and opportunistic routing mechanisms.

%------------------------------------------------------------------------- 
\section{Hypothesis and problem statement}
\subsection{Hypothesis}
\label{hyp}
In this section we discuss the assumptions we consider. Assumptions we make are mainly related to the sensor capabilities, the radio environment and the application. Moreover, our proposition is based on more specific requirements.

Assumptions on the limited capacities of sensor nodes:
\begin{itemize}
 \item Sensors have a limited amount of energy;
 \item The nodes have a limited amount of memory.
\end{itemize}

Assumptions linked to the radio:
\begin{itemize}
 \item The radio is half-duplex and mono-channel;
 \item We assume a 2-hop interference model, meaning that nodes are able to receive packets from their 1-hop neighbors and to detect activity of their 2-hop neighbors. We will discuss this assumption in the performance evaluation section (section \ref{perfs});
 \item Radio links are symmetric.
\end{itemize}

Assumption linked to the application:
\begin{itemize}
 \item The traffic intensity is low and consists in alarm packets converging toward the sink;
\end{itemize}

Assumption more specific to our propositions:
\begin{itemize}
 \item Nodes have local coordinates which give information on the number of hops from the sink and which are unique in a 2-hop neighborhood. This provided by the mechanisms of \cite{Mouradian12}.
\end{itemize}

\subsection{Problem statement}
We want to design a protocol in order to support real-time alarm gathering in WSNs. The characteristics of WSNs impose to this protocol to respect requirements. It should be scalable due to the large scale of WSNs, it should be reliable because the applications are critical and the wireless links are unreliable. The protocol should ensure that the end-to-end delay is lower than a given bound. In the remainder of the paper we propose and evaluate RTXP, a protocol which respects those requirements.

%------------------------------------------------------------------------- 

\section{Proposition: a novel Real-Time X-layer Protocol, RTXP}
In this section we detail RTXP, a cross-layer (MAC and routing) protocol which guarantees a bounded end-to-end delay for alarm packets. We first give the general ideas of the protocol, we describe the virtual coordinate it uses as a metric, we go further into the mechanisms of the protocol and we discuss supported applications in function of the parameters of the radio.

\subsection{General idea}
\label{general}
As energy is a main concern in WSNs, RTXP uses a duty-cycle mechanism. We call \emph{awake period} the period in which the nodes are awake and \emph{sleep period} the one in which they turn off their radio.

\begin{figure}[h!]
  \centering
  \includegraphics[width=4in, keepaspectratio=true]{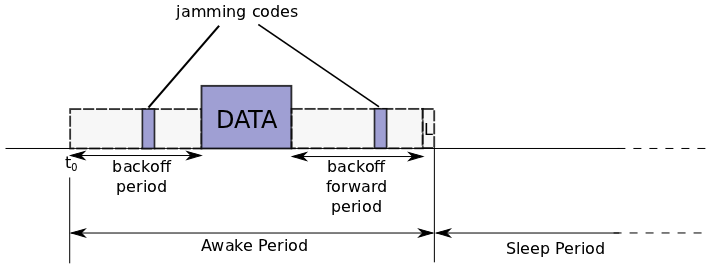}
  \caption{Description of the proposition}
  \label{protocolGeneral}
\end{figure}

As depicted on Figure~\ref{protocolGeneral}, the \emph{awake period} is divided into three main phases:
\begin{itemize}
 \item A backoff period in which nodes, with a packet in their queue, contend to reserve the channel. The contention occurs between nodes of a 2-hop neighborhood. Each node has a backoff timer that is calculated from its coordinate and is unique in the 2-hop neighborhood. During the backoff the nodes sense the channel. Then, either the node's backoff expires and the node sends a jamming code meaning that it gains access to the channel, or it detects a jamming code before the end of its backoff timer (meaning that it loses the contention).
 \item  A time slot during which the data packet is transmitted.
 \item Another backoff period during which all the nodes that received the data packet contend to forward it. This backoff period works the same way as the one for channel reservation.
\end{itemize}

During the L slot, any node that lost the contention for the channel can send a jamming code which triggers a new awake period for the nodes that detects it. As it will be discussed in section \ref{vcs} and thanks to the uniqueness of the coordinate in a 2-hop neighborhood, the access to the channel and the selection of the forwarder are deterministic. 

\subsection{Wakeup time: preambles versus synchronization}
\label{sync}
In the previous subsection we assume that all the nodes of a given 2-hop neighborhood are awake at the beginning of the first backoff period noted $t_{0}$. There are two ways of achieving this goal. 

The first is by using a long preamble as in \cite{polastre04}. With this technique a node that has a packet to send, sends a preamble which duration is equal to the duration of a duty-cycle. A node, when it wakes up, senses the channel, if it detects energy it stays awake until the end of the preamble. In the case of RTXP, the end of the preamble corresponds to $t_{0}$. This solution does not need to maintain a global synchronization of the nodes. Nevertheless, the emission of the preamble consumes energy and time. Indeed, when an alarm converge toward the sink, each relaying node must send the preamble before executing the three phases presented in previous subsection.

The second way is to have a global synchronization of the nodes of the network so that all the nodes can wake up at the same time. This solution may seem energy costly but recent work \cite{lampin13} showed that with a suitable synchronizing technique the global synchronization can be more energy efficient. Moreover, with global synchronization a packet can be forwarded several times during a duty-cyle. For these reasons we choose to use a global synchronization for RTXP. So for the remainder of this paper we assume that the nodes of the WSN are synchronized. Nevertheless, we can notice that the two schemes allow to implement real-time protocols.
  
\subsection{Virtual coordinate system}
\label{vcs}
The Virtual Coordinate System (VCS) used by RTXP consists in a 1-D coordinate, which has been proposed in~\cite{Mouradian12}, and is based on two parameters. The first one is the hop-count to the destination node, but many nodes can have the same hop-count. The nodes having the same hop-count can be seen, conceptually, as forming concentric rings centered on the sink. So the second information is the connectivity of a node with the different rings (noted $\mathit{offset}$). These two parameters are then integrated into one coordinate which classifies the nodes of a same ring. This classification is related to the connectivity, nodes having more neighbors in proportion in the rings nearer to the sink are classified before nodes having more neighbors in proportion in the rings further from the sink. This information allows to give priority to nodes more connected to lower rings for the forwarders selection. Figure~\ref{coord} illustrates the coordinate, $R$ is the estimated radio range of a node, $\mathit{offset}$ embed the information on connectivity (c.f. ~\cite{Mouradian12} for more details) and $n$ is a ring number. The coordinate is given by $
Coord=(n-1)*R+\mathit{offset} \mbox{ with } \mathit{offset}<R$. In the proposed solution the probability of having two nodes with the same coordinate in a 2-hop neighborhood is low but not null (this issue is discussed in \cite{Mouradian12}). In this paper we assume that the coordinate is unique in a 2-hop neighborhood. 
\begin{figure}[h]
  \centering
  \includegraphics[width=2.5in, keepaspectratio=true]{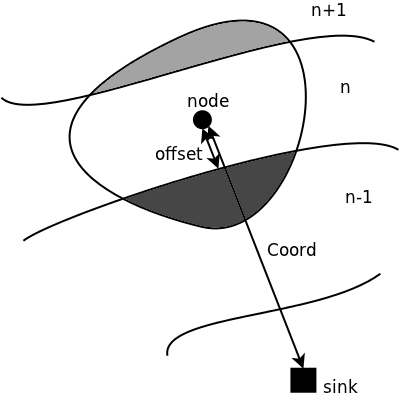}
  \caption{Conceptual view of the 1-D coordinate used by RTXP}
  \label{coord}
\end{figure}

\subsection{In-depth detail of RTXP}
\label{rtxp-det}
First, we describe further the three phases of the protocol mentioned in section \ref{general}.
 
\textbf{Phase 1.} In the first backoff period, each node being awake and having a packet to send contends for the channel. During the contention a node senses the channel. If it detects energy on the channel before the end of its backoff timer, it loses the contention. Otherwise, it sends a jamming code. A jamming code is a short sequence of bits, possibly random. The technique is very similar to preamble sampling \cite{polastre04} but with the jamming code being shorter than a typical preamble. If a node loses the contention it can notify it in a dedicated slot (noted L for Lost on Figure \ref{protocolGeneral}) by sending a jamming code in it. Every node that receives a jamming code in the L slot will stay awake for another awake period. As mentioned previously, we assume that the nodes are able to detect jamming codes from their 2-hop neighborhood in order to prevent the hidden terminal problem. The backoff timer is calculated with a bijective function from the coordinate (for example the $\mathit{offset}$ is translated directly into milliseconds so the function is of the type $y=x$). The lemma 4.1 ensures that there is no collision in a two-hop neighborhood.

\begin{lemma}
 If the coordinates are unique in a 2-hop neighborhood and the backoff function is bijective then there is only one node that wins the contention.
\end{lemma}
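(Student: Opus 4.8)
The plan is to show that bijectivity of the backoff function converts the uniqueness of the coordinates into uniqueness of the backoff timers, so that the contention has a well-defined ``first mover'' which then silences every other contender with its jamming code. So the proof splits into three parts: distinctness of timers, the minimal-timer node wins, and no other node wins.

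First I would fix a 2-hop neighborhood and let $S$ be the set of its nodes that hold a packet and hence contend at $t_0$; if $|S|\le 1$ the statement is immediate, so assume $|S|\ge 2$. For $i\in S$ write $c_i$ for its coordinate and $b_i=f(c_i)$ for its backoff timer, $f$ being the backoff function. By hypothesis the $c_i$ are pairwise distinct, and since $f$ is bijective (in particular injective) the $b_i$ are pairwise distinct too. Therefore $\min_{i\in S} b_i$ is attained at a single node $i^{\star}$.

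Next I would argue that $i^{\star}$ wins. By minimality and distinctness no node of $S$ reaches the end of its backoff before time $b_{i^{\star}}$, so no jamming code is present on the channel during $[t_0,b_{i^{\star}})$; consequently $i^{\star}$ senses an idle channel throughout its backoff, its timer expires, and it emits its jamming code, i.e.\ it seizes the channel. Then for any $j\in S\setminus\{i^{\star}\}$: at time $b_{i^{\star}}$ the jamming code of $i^{\star}$ is on the medium, and by the 2-hop interference assumption every node in the neighborhood detects this activity; since $b_j>b_{i^{\star}}$, node $j$ detects energy strictly before its own timer expires and therefore loses. Hence $i^{\star}$ is the unique winner.

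The point I would treat most carefully — the main obstacle — is making precise the geometric scope in which ``only one node wins'' actually holds: two nodes sharing a common 2-hop neighbor can be up to four hops apart, in which case the winner's jamming code need not reach the other, so the claim must be read as applying to nodes that are pairwise within interference range (or relative to the node around which the neighborhood is centred), and the hidden-terminal-free property rests entirely on the assumption that 2-hop activity is detectable. A secondary detail to pin down is timing granularity: the step of the argument where $j$ hears the code \emph{strictly before} $b_j$ requires the backoff increment separating consecutive coordinate values to exceed the propagation-plus-carrier-sense delay, so that distinct coordinates yield operationally distinguishable backoff instants; I would record this as a standing assumption on the parameters of $f$.
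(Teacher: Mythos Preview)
Your argument is correct and actually more complete than the paper's. The paper proves the lemma by contradiction in two lines: assume two nodes win, then both finished their backoff simultaneously, hence had the same backoff value, hence---by bijectivity of $f$---the same coordinate, contradicting uniqueness in the 2-hop neighborhood. That is all.

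The difference is structural. The paper's proof establishes only \emph{at most one} winner; yours is a direct argument that also establishes \emph{existence} by exhibiting the winner as the node $i^{\star}$ with minimal backoff and then checking that every other contender hears its jamming code before its own timer expires. Your version therefore proves a slightly stronger statement and, in doing so, surfaces the hidden assumptions the paper glosses over: that the statement is really about nodes pairwise within interference range (your ``geometric scope'' remark), and that distinct coordinates must translate into backoff values separated by more than the carrier-sense latency. The paper's contradiction argument is shorter precisely because it never opens these boxes; your approach buys a more honest accounting of what the protocol model needs for the lemma to hold.
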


\begin{proof}
  \label{lemma_contention}
 We do a proof by contradiction. Let's suppose there are two nodes that win the contention (i.e. there is a collision). That means they have the same backoff time which implies that either they have the same coordinate or the backoff function is not bijective which is a contradiction.
\end{proof}.

\begin{figure}[h!]
  \centering
  \includegraphics[width=4in, keepaspectratio=true]{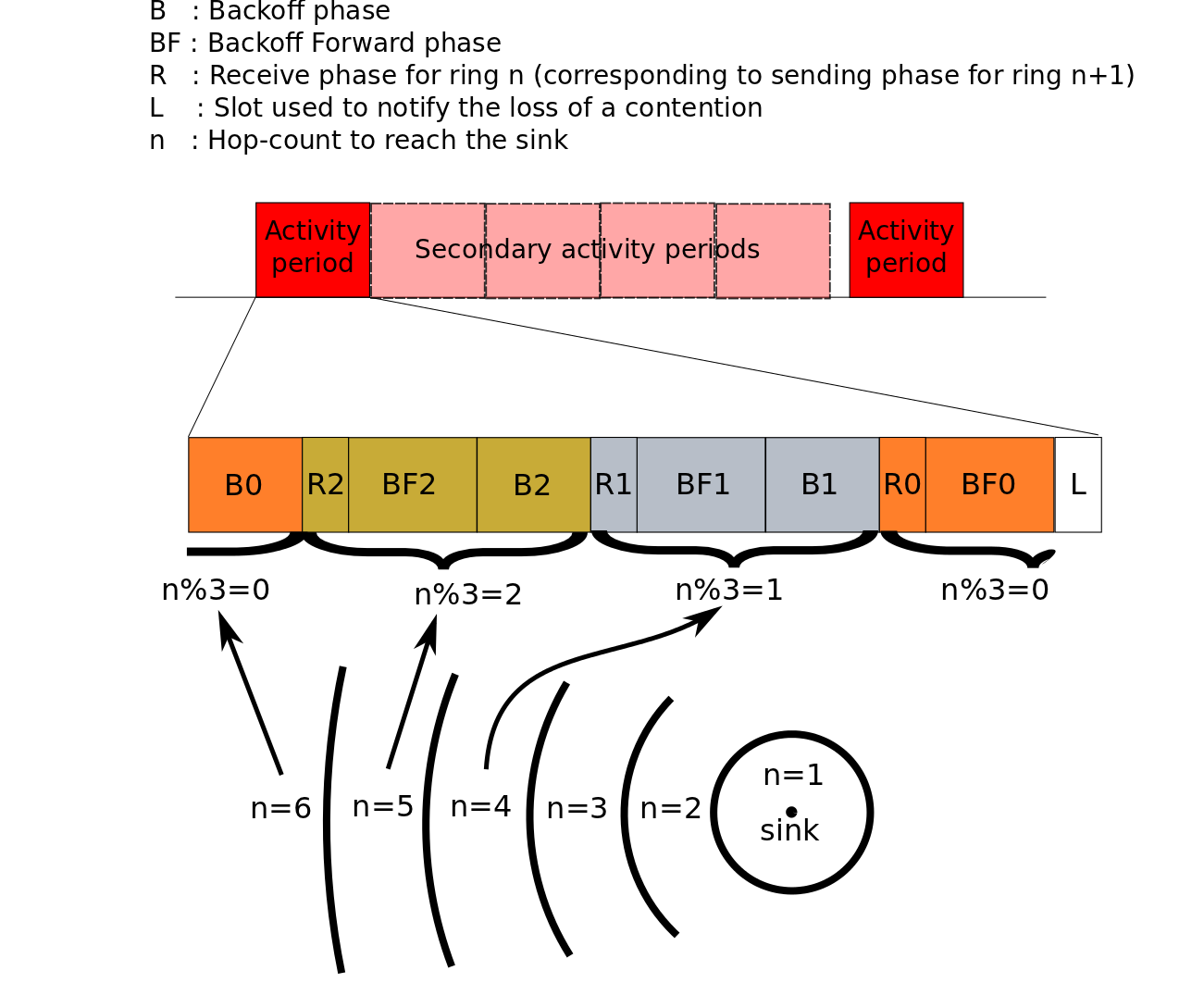}
  \caption{Description of the proposition}
  \label{protocol}
\end{figure}

\textbf{Phase 2.} During the second phase (data emission and reception), the node (with hop-count $n$) who won the contention of the first phase sends its packet and the nodes in range, with hop-count $n-1$, receive it.

\textbf{Phase 3.} The third phase is another contention period (backoff forward phase). All the nodes that received a packet in the second phase contend to know which one will forward it. As in the first phase, the backoff function is bijective and calculated from the coordinate. We can notice, in this case, that we want to preserve the order given by the coordinate, thus the function must also be strictly monotonic. The first node which backoff ends, sends a jamming code to notify the others that it will be the forwarder. We can notice that this mechanism is a kind of opportunistic forwarding based on the coordinate.

\textbf{Organization of the phases.} As said in section \ref{sync}, we choose to use a global synchronization scheme. This allow to forward a packet several times during one duty-cycle because potential forwarders are already synchronized. We also mentioned that we assume a 2-hop interference model, meaning that nodes that are three hops away can transmit at the same time. Thus, to give a chance to each node to transmit during a duty cycle whatever its hop-count is, we should define three \emph{awake periods} per duty-cycle (but we keep only one $L$ slot), one for nodes with $3j$ hop-count, one for $3j+1$ and one for $3j+2$ with $j \in \mathbb{N}$. As depicted on Figure~\ref{protocol} an \emph{activity period} is composed of three \emph{awake periods}. A packet can, at most, reach a node with $3j$ hop-count from a node with a $3j+3$ hop-count during one \emph{activity period}.

Figure~\ref{protocol} depicts the different phases for nodes with different hop-counts. $B_i$, $R_i$ and $BF_i$ correspond respectively to backoff, receive and backoff forward phases
with $i=n \bmod{3}$. For example a node 6 hops away from the sink contends in $B_0$ if it has a packet to send. It sends the packet in $R_2$ if it wins the contention. It wakes up in $R_0$ to potentially receive a packet and if it has received one it does the $BF_0$ phase to try to forward the packet. 

When a node, which has a packet to transmit, loses the contention during the backoff phase, it has the opportunity to claim a new \emph{activity period} (named \emph{secondary activity period}). This new \emph{activity period} follows the previous one, without sleeping time. Only nodes which sense a jamming code in the L slot stay active.

In WSNs links are unreliable, nodes may experience fading and shadowing. Thus packets may not be correctly received. In order to mitigate the impact of unreliable link we use a broadcast-based routing scheme, a data packet is received by several nodes. The forwarder is elected during the BF phase. Moreover, the jamming code sent during the election of the forwarder (BF phase) is used as an acknowledgement. A node which sends a packet in the R phase then waits for a jamming code, if it does not receive one, the packet is considered lost and the node send a jamming in the L slot to request a new \emph{activity period}. The packet is resent in the new \emph{activity period}. 

\textbf{Example.} The figure \ref{example} shows an example with a simple network. The nodes A and B have both data to send to the sink at the beginning. They contend in the first part of the \emph{activity period}. B wins the contention so it can send its packet in the $R$ phase to C. Similarly, C sends it to D, and D then forward it to the sink. At the end of the first \emph{activity period}, node A sends a jamming code in the L slot because it just lost the contention previously. Because we assume a 2-hop interference model, B, C and D sense the jamming code so they stay awake for a new \emph{activity period}. In this second activity, only A has a packet to send thus A wins the contention and the packet is forwarded to C. At the end of the second \emph{activity period} all the nodes go to sleep mode because no node transmits during the L slot (every packet has done at least one hop).

\begin{figure}[!h]
  \centering
  \includegraphics[width=4in, keepaspectratio=true]{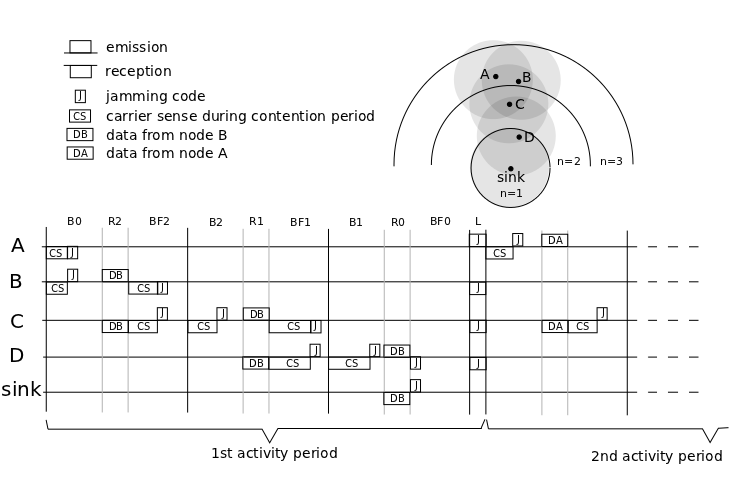}
  \caption{Example considering 4 nodes where nodes A and B have a packet to transmit}
  \label{example}
\end{figure}

\begin{table}[!h]
	\centering
	\begin{tabular}{|l|l|r|}
  		\hline
 	 	symbol & signification \\
  		\hline
		$D_{B}$ & Duration of the backoff ($B$) phase\\
		$D_{BF}$ & Duration of the backoff forward ($BF$) phase\\
		$D_{R}$ & Duration of the receive ($R$) phase\\
		$D_{L}$ & Duration of the L slot \\
		$D_{jamming}$ & Duration of the jamming code \\
		$D_{activity\_period}$ & Duration of an \emph{activity period} \\
  		$D_{awake}$ & Duration of awake period\\& during an \emph{activity period}  \\
		$D_{sleep}$ & Duration of the sleep period\\
		$WCTT_{RTXP}$ & Theoretical bound on the end-to-end delay for RTXP \\
  		$DC$ & The duty-cycle ratio \\
		$C_{RTXP}$ & Capacity of RTXP \\
		$NB_{hop\_max}$ & The maximum number of hops from the sink\\
		$E_{backoff}$ & Energy consumed during the backoff ($B$) phase\\
		$E_{backoff\_forward}$ & Energy consumed during the backoff forward $BF$ phase\\
		$E_{TX\_jamming}$ & Energy consumed during the emission of a jamming code\\
		$E_{TX\_packet}$ & Energy consumed during the emission of a packet\\
		$E_{RX\_packet}$ & Energy consumed during the reception of a packet\\
		$E_{1hop\_RTXP}$ & Energy consumed by RTXP to do one hop\\
%		$k$ & Number of neighbors of a node\\
  		\hline
	\end{tabular}
	\caption{Notations used in the description of RTXP}
	\label{notations}
\end {table}

%Moreover the radio must be able to sample the channel, with a period that is smaller than the difference between any two backoff durations. If it is not the case, a collision can occur because a node may not sense the beginning of a jamming code before it emits its own jamming code.

%The dimensioning of these periods highly depends on the characteristics of the radio chip used for the deployment, the higher the bitrate, the shorter the periods. 

\section{Theoretical analysis}
\label{theory}
\subsection{Delay, capacity and energy}
In order to compute a bound on the end-to-end delay, called Worst Case Traversal Time (WCTT), we propose to calculate the worst duration for one hop and multiply it by the maximum number of hops in the network. We start by defining intermediate delays. The notations used in this section are detailed in Table~\ref{notations}.

$B$ and $BF$ durations depend on the backoff duration which is function of the offset ($backoff=f(\mathit{offset})$). Indeed, each node must have the possibility to send a jamming code during the period so the duration of the backoff phases must be equal to the maximum backoff duration plus the duration of a jamming code: \begin{eqnarray}D_{B}=D_{BF}= max(backoff)+D_{jamming}\end{eqnarray} 

The R phase duration is the time required to transmit a data packet (noted $D_{R}$). In our case the data packet is an alarm packet which size is in the order of magnitude of a few dozens of bytes.

The duration of the L slot ($D_{L}$) is equal to the duration of a jamming code ($D_{jamming}$).

The sleep period duration ($D_{sleep}$)  is calculated based on the duration of an \emph{awake period} ($D_{awake})$. This duration corresponds to the $B$ phase plus the $BF$ phase plus two $R$ phases (one to send data to lower hop-count neighbors and one to receive data from upper hop-counts nodes). The calculation of $D_{sleep}$ depends on the duty-cycle ratio ($DC$) (which depends itself on the application):
\begin{eqnarray}
DC=D_{awake}/(D_{sleep}+D_{awake})
\end{eqnarray}
\begin{eqnarray}
D_{awake}=D_{B}+D_{BF}+2\times D_{R}+D_{L}
\end{eqnarray}
\begin{eqnarray}
D_{sleep}=D_{awake}\times (\frac{1}{DC}-1)
\end{eqnarray}

The duty-cycle ratio typically depends on the application characteristics, this aspect is discussed in section \ref{to}. 

The \emph{activity period} is represented on the figure \ref{protocol} its duration is given by:
\begin{eqnarray}
D_{activity\_period}&=&3 \times (D_{B}+D_{BF} +D_{R}) + D_{L}
\end{eqnarray}

The worst case duration for one hop is given by $D_{activity\_period}+D_{sleep}$ because a packet, in the worst case, is transmitted after having lost the contention every time until the limit given by the end of the sleep period. We take the maximum number of hops plus one because there is a delay (which is at most one duty cycle period) between the instant the event is sensed and the first emission of the corresponding packet.
\begin{eqnarray}
WCTT_{RTXP}=(NB_{hop\_max}+1)\times (D_{activity\_period}+D_{sleep}) 
\end{eqnarray}

The number of \emph{activity periods} is limited by the length of the sleep period (which depends itself on the duty-cycle value). We define the capacity of RTXP ($C_{RTXP}$) as the number of packets that can be transmitted into a 2-hop neighborhood during the duty-cycle. Given the duty-cycle duration $(D_{activity\_period}+D_{sleep})$, the capacity is:
\begin{eqnarray}
 C_{RTXP}= \lfloor \frac{D_{activity\_period}+D_{sleep}}{D_{activity\_period}} \rfloor 
\end{eqnarray}

%attention ici on prend l'hyppothèse du lien parfait !!!!! il faut dire plus tôt que le slot L sert aussi à dire qu'un noeud n'a pas reçut d'aquitement pour son paquet
\begin{theorem}
 Let $n\in \mathbb{N}$ and $p\in \mathbb{N}$ be respectively a hop-count number and the number of packets in a 2-hop neighborhood at hop-count $n$. Assuming that packet can only be lost because of collisions, all packets in every 2-hop neighborhood at hop-count $n$ will reach hop-count $n-1$ in at most a duty cycle period if $p<C_{RTXP}$.
\end{theorem}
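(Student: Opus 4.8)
The plan is to prove the statement by a counting argument over consecutive \emph{activity periods}: I would show that each \emph{activity period} moves exactly one of the $p$ packets from hop-count $n$ to hop-count $n-1$, that \emph{activity periods} keep being triggered back-to-back as long as packets remain at hop-count $n$, and that the hypothesis $p<C_{RTXP}$ guarantees that enough of them fit inside one duty-cycle period.

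First I would fix a 2-hop neighborhood and establish the invariant that during one \emph{activity period} the number of packets held at hop-count $n$ in that neighborhood decreases by exactly one. For ``at least one'': among the hop-count-$n$ nodes holding a packet, they all contend in the single \emph{awake period} indexed by $n \bmod 3$, and by Lemma 4.1 (bijective backoff applied to 2-hop-unique coordinates) exactly one of them wins the $B$ phase and transmits in its $R$ phase; applying Lemma 4.1 again to the $BF$ contention among the receivers (which are 1-hop neighbors of the sender, hence pairwise within two hops) there is a unique elected forwarder, whose jamming code also serves as the acknowledgement, so — using the hypothesis that packets are lost only through collisions together with the absence of any collision — the packet is delivered one hop and the sender learns it. For ``at most one'': the only hop-count-$n$ transmission in that neighborhood during the \emph{activity period} is that of the $B$-phase winner. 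Hence the packet count at hop-count $n$ strictly decreases by one per \emph{activity period} until it reaches zero.

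Next I would note that, as long as at least one packet remains at hop-count $n$ after the $B$ phase, the corresponding node (which necessarily lost the contention) emits a jamming code in the $L$ slot, which by the 2-hop interference assumption is sensed throughout the neighborhood and triggers a \emph{secondary activity period} immediately, with no sleep time in between; so the first $p$ \emph{activity periods} after the instant the $p$ packets are queued occur consecutively and clear all of them. Finally, by definition $C_{RTXP}=\lfloor (D_{activity\_period}+D_{sleep})/D_{activity\_period}\rfloor$ is exactly the number of whole \emph{activity periods} contained in one duty-cycle period $D_{activity\_period}+D_{sleep}$, so since $p<C_{RTXP}$ those $p$ \emph{activity periods} all fall within a single duty-cycle period, which is the claim. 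I expect the delicate step to be the ``at most one advance per \emph{activity period}''/no-collision part once the claim is asserted for \emph{every} 2-hop neighborhood at hop-count $n$ simultaneously: one must combine the three-class ($3j,3j+1,3j+2$) partition of hop-counts with the 2-hop interference model to rule out that $B$-phase winners of two distinct but spatially close 2-hop neighborhoods at the same hop-count corrupt each other's receptions — this is the point where Lemma 4.1 has to be used jointly with the interference assumption rather than invoked in isolation.
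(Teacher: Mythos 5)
Your argument is correct and is essentially the direct (constructive) form of the paper's own proof, which argues by contradiction: both rest on the same two ingredients, namely Lemma 4.1 guaranteeing a unique contention winner (hence exactly one packet advancing and no collision loss per \emph{activity period}, with the $L$-slot chaining \emph{secondary activity periods} back-to-back) and the identity $C_{RTXP}=\lfloor (D_{activity\_period}+D_{sleep})/D_{activity\_period}\rfloor$ counting how many such periods fit in one duty cycle. If anything you are more careful than the paper, which silently assumes the inter-neighborhood interference issue away, whereas you correctly flag that the $3j/3j+1/3j+2$ scheduling and the 2-hop interference model must be invoked jointly with Lemma 4.1 to protect receptions across nearby 2-hop neighborhoods at the same hop-count.
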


\begin{proof}
 We do a proof by contradiction. Let's suppose one packet did not reach hop-count $n-1$ in one duty-cycle period. Then either the packet was lost or it was delayed until the end of the period. As we assumed the only way to lose a packet is because of a collision. By lemma 4.1 we know that it is not possible so it is a contradiction. If the packet is delayed until the end of the duty-cycle period that means the node lost every contention until the end for that packet. So it lost more than $\frac{D_{activity\_period}+D_{sleep}}{D_{activity\_period}}=C_{RTXP}$ contentions, so there were more than $C_{RTXP}$ packets ($p>C_{RTXP}$) in a 2-hop neighborhood which is a contradiction.
\end{proof}

Thus, under this capacity limit, the delivery ratio is 100\% with the hypothesis that packet loss is only due to interferences with other nodes. As we mentioned previously, it is not the case in practice, because nodes may experience fading or shadowing. We also mentioned that this issue is mitigated by broadcast routing and retransmissions. This issue is discussed in the performance evaluation section (section \ref{perfs}). 

%We can notice that  a packet can do at most three hops during one activity period if the nodes, which forward it, do not lose any contention. If one node loses a contention, the packet may do only one hop because the node has to notify the loss in the L slot, and then, the $n-2$ and $n-3$ rings may not be awake anymore.

The energy used during one hop if the relaying node wins the contention is:
\begin{eqnarray}
\nonumber E_{1hop\_RTXP}&=&E_{backoff}+E_{TX\_jamming}+E_{TX\_packet}\\
& & +E_{RX\_packet}+E_{backoff\_forward}\\
\nonumber & & +E_{TX\_jamming}
\end{eqnarray}

\subsection{Trade-offs}
\label{to}
The capacity depends on the inverse of the duty cycle ratio, so the longer the sleep period the higher the capacity. 
%Moreover with a longer sleep period more energy is saved. 
Nevertheless the bound on the delay of a packet increases with the sleep period. Thus a trade-off which depends on the application has to be found between the bound on the delay and the capacity of the protocol. In the case of applications with low traffic and short time constraints a small sleep period should be used. In the case of applications with high traffic and less tight time constraints a longer sleep period should be preferred.

\begin{table}[!h]
	\centering
	\begin{tabular}{|l|l|r|}
  		\hline
 	 	Parameter & Value \\
  		\hline
		Maximum number of hops & 5\\
		Duration of jamming code & 200$\mu$s \\
		Duration of Backoff phases (B and BF) & 10.2ms\\
		Duration of data transmission (R phase)& 32ms\\
		Duty-cycle ratio & from 100\% to 1\% of activity\\
  		\hline
	\end{tabular}
	\caption{Parameters used for the plot of capacity vs WCTT}
	\label{val}
\end {table}

\begin{figure}[!h]
  \centering
  \includegraphics[width=3in, keepaspectratio=true]{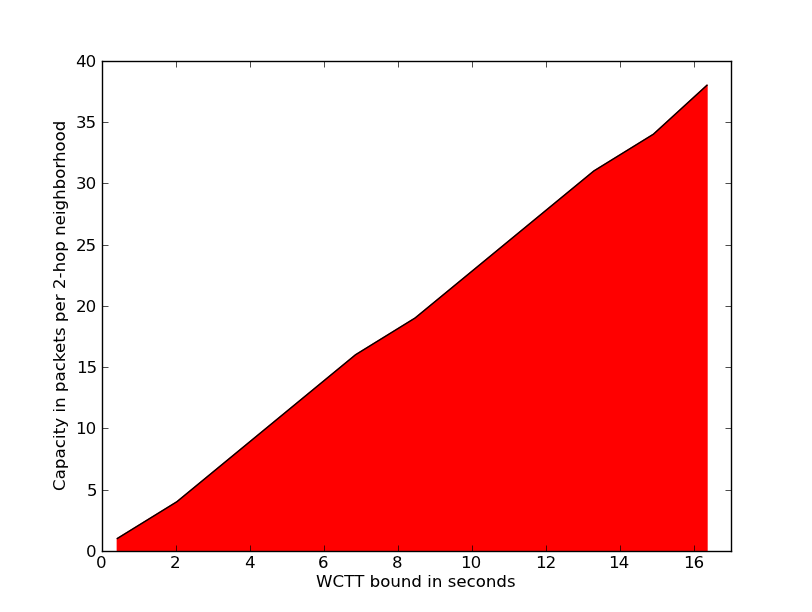}
  \caption{Capacity of RTXP in function of the WCTT}
  \label{capvsdelay}
\end{figure}

Figure \ref{capvsdelay} is a plot of the capacity given in packets per 2-hop neighborhood that can be handled during a duty cycle in function of the WCTT (which depends itself on the duty-cycle ratio). The colored part corresponds to feasible zone for RTXP. The expression is derived from equations 6 and 7. For example with these values (given in Table \ref{val}), if the application requires a WCTT of 6 seconds the maximum capacity of RTXP is 15. This means that 15 packets can be transmitted in a 2-hop neighborhood during a duty-cycle.

\section{Performances evaluation and protocols comparisons}
\label{perfs}
In this section we evaluate the performances of our solution by simulation and compare it to state of the art protocols. We compare RTXP with a centralized real-time solution, PEDAMACS. With unreliable links, it is not possible to give hard real-time bound on the end-to-end delay. Indeed it is not possible to give with certainty the number of retransmissions needed for a packet to be correctly received. We thus compare RTXP with a nondeterministic solution, to show that deterministic solutions are more reliable.

\subsection{PEDAMACS}
To the best of our knowledge RTXP is the first localized real-time X-layer protocol for WSNs. Existing real-time X-layer solutions are centralized. We, thus, compare our solution to a centralized mono-channel solution, PEDAMACS. As described in section \ref{centralized}, in the PEDAMACS protocol, the sink node produces a scheduling frame after retrieving topology information (tree graph). In this section we define the worst case traversal time and the energy consumption of PEDAMACS.

In \cite{coleri02}, authors state that it is ensured that all the packets reach the sink during the scheduling phase (i.e. during the scheduling frame). The maximum length of the scheduling frame depends on the topology, some possible cases are given in~\cite{coleri02}. We will consider the case of a general tree graph $G=(V,E)$ with a 2-hop interference model. Such a graph is retrieved by the sink during the initialization phases as described in section \ref{centralized}. In this case, the maximum frame length is:
\begin{eqnarray} 
  WCTT_{PEDAMACS}=3\times(|V|-1)\times T_{slot} 
\end{eqnarray}

%test with special cases ?
Equation 9 shows that, in the case of PEDAMACS, the bound on end-to-end delay (the worst case) does not depend on the number of hops but on the number of nodes (the worst case is a linear network).

We evaluate the energy-consumption induced by a packet to do one hop. In the case of PEDAMACS it is only the energy used by one node to send the packet and by another to receive it:
\begin{eqnarray}
E_{1hop\_PEDAMACS}=E_{TX\_packet}+E_{RX\_packet}
\end{eqnarray}

\subsection{Simulation environment and parameters}
The simlulations are performed with the WSNet simulator \cite{wsnet}. WSNet is a discrete event simulator which is designed especially for the simulation of WSN characteristics. 
For the simulations, we generated 140 random topologies, where nodes are distributed on a 50x50 units plane according to a uniform law. The 140 topologies are divided into sets of 20 topologies of $m\times100$ nodes with $m$ an integer $\in [2,8]$. A simulation is run for each topology.

\begin{table}[!h] %We did 20 simulations for each point (
	\centering
	\begin{tabular}{|l|l|r|}
  		\hline
 	 	Parameter & value \\
  		\hline
		Number of nodes & 100 to 800 \\
  		Bitrate & 500kbps \\
  		Radio range & 10 units \\
  		Area & 50$\times$50 units \\
  		Packet size & 100 bytes \\
		Jamming code duration & 200 $\mu$s\\
  		Backoff duration & 10,2 ms \\
		Duty-cycle ratio & 1\%\\
  		Path loss exponent & 2\\
		$\sigma$ of log-normal law & 4\\
		
  		\hline
	\end{tabular}
	\caption{Simulation parameters}
	\label{param}
\end {table}
%provide a graph with reception probability in function of the distance for FS and LN

During each simulation 200 packets are sent. The traffic consists in alarm generated periodically from a random point of the network. Every period a node is picked randomly among all nodes of the network to be the origin of the alarm. In the simulations we considered two rates, 1 alarm every 5 seconds and 1 alarm every seconds so we can observe how the protocols simulated under different traffic loads react. For the topologies simulated, these rates are far from the capacity limit expressed in equation 7. Indeed, we use a duty-cycle ratio of 1\%, from equations 2 and 4 we can deduce that the duration of a duty cycle is about 2.5 seconds. According to equation 7, it means that 100 packets can be forwarded in 2.5 seconds in a 2-hop neighborhood (about 40 packets per seconds). With 1 alarm every 5 seconds and 1 alarm every seconds, we are thus in cases which correspond to the low traffic hypothesis made in section \ref{hyp}. It allows the nodes to sleep most of the time (few secondary activity periods triggered).

The radio model is a mono-channel half-duplex with a 500kbps rate. Parameters of the simulations are detailed in table \ref{param}. 

We use two propagation models, the free-space and the log-normal shadowing models. The free-space model allows us to evaluate the performances of our protocol when the packet losses are only due to nodes interfering each others. It allows us to confront the statement made in section \ref{protocol} with simulation results. The log-normal shadowing model provide a much more realistic propagation model for WSNs \cite{Zuniga04}.

\subsection{RTXP vs PEDAMACS}
In this section we present the results of simulations of RTXP and PEDEMACS on free-space and log-normal shadowing propagation channels. We compare the end-to-end delays of the packets, the energy spent during simulations and the delivery ratio of these protocols. We compare the delays observed during the simulations with the theoretical WCTTs of RTXP and PEDAMACS, respectively expressed equations 6 and 9. On the figures the end-to-end delay of a packet is represented by a cross, we choose to represent all the values to be able to observe the distribution of the delays. The circles correspond to the average delay for a given number of neighbors. The black solid curve corresponds to the theoretical WCTT.

\subsubsection{Free Space propagation model.}
\begin{figure}[!h]
    \subfigure[1 packet every 5 seconds]{\includegraphics[width=2.4in, keepaspectratio=true]{./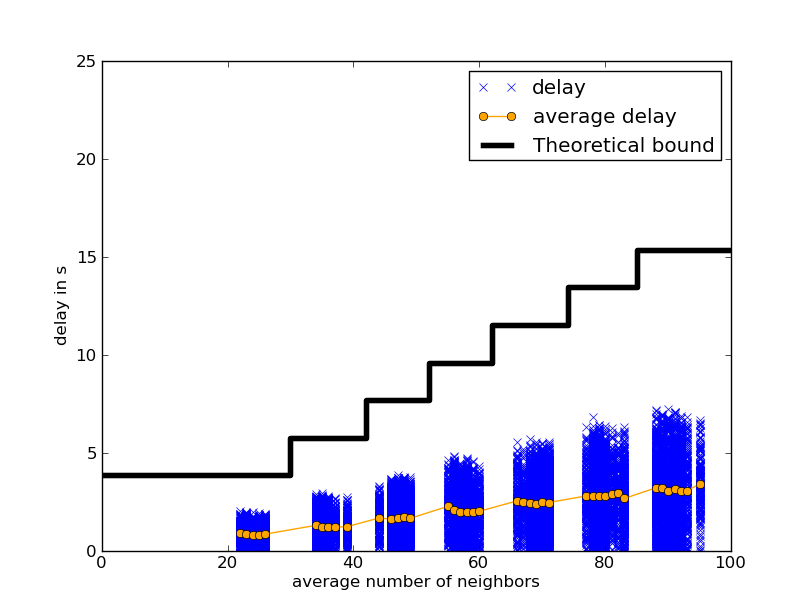}
\label{delay-p5s}}
    \hfil
\subfigure[1 packet per second]{\includegraphics[width=2.4in, keepaspectratio=true]{./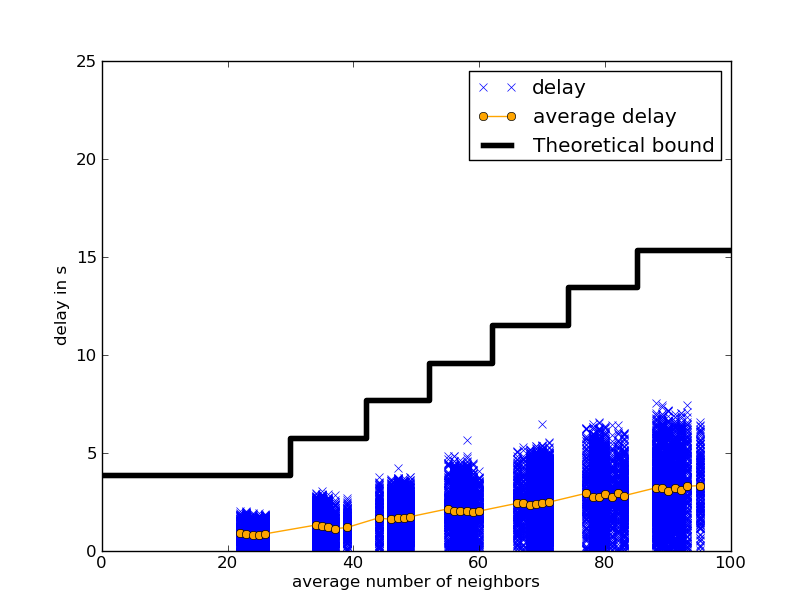}
      \label{delay-p1s}}
\caption{PEDAMACS - free-space propagation model}
\end{figure}

Figures \ref{delay-p5s} and \ref{delay-p1s} respectively represent the end-to-end delay of alarms in function of the average number of neighbors for PEDAMACS for 1 packet every 5 second and 1 packet per second rates. First we can notice that all the packets meet their deadlines. This is ensured by the global scheduling. Moreover, the scheduling also ensured that there are no interfering nodes are communicating at the same time. Since it is the only way to lose a packet with free-space propagation model, we observe a delivery ratio of 100\%. The delay is not affected by the traffic load because the scheduling frame do not change according to it.

\begin{figure}[!h]
    \subfigure[1 packet every 5 seconds]{\includegraphics[width=2.4in, keepaspectratio=true]{./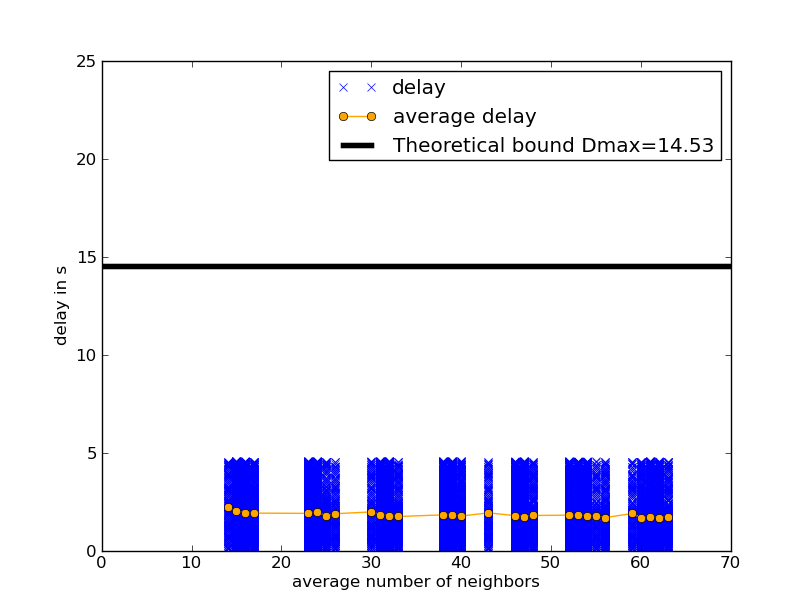}
\label{delay-r5s}}
    \hfil
\subfigure[1 packet per second]{\includegraphics[width=2.4in, keepaspectratio=true]{./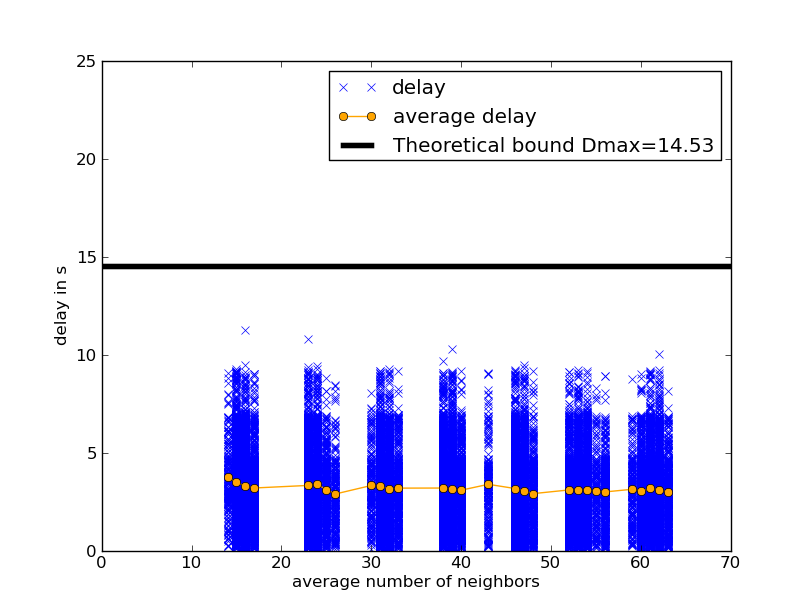}
      \label{delay-r1s}}
\caption{RTXP - free-space propagation model}
\end{figure}

Figures \ref{delay-r5s} and \ref{delay-r1s} depict the end-to-end delays and theoretical bound for RTXP. In this case, we also observe that all the packets meet their deadlines as predicted in section \ref{theory}. Moreover, the delivery ratio is also 100\%. Nevertheless, the increase in the load affects the delay, it produces an increase of the delays of some packets. This is due to the fact that when the load increases, it triggers more \emph{secondary activity periods} because there are more packets which are at the same time in the same 2-hop neighborhood. On the other hand, the delay does not vary with the average number of neighbors.

\textbf{Energy consumption.}
\begin{figure}[!h]
  \centering
  \includegraphics[width=3in, keepaspectratio=true]{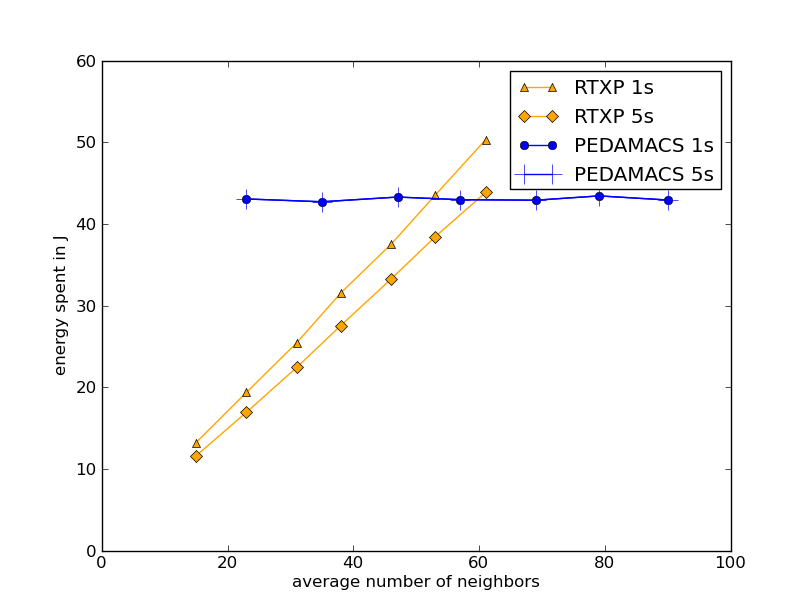}
  \caption{Maximum energy consumption of runtime of PEDAMACS and RTXP}
  \label{energy}
\end{figure}

Figure \ref{energy} depicts the maximum energy consumption observed. Each point of the curves corresponds to the maximum value for 20 topologies of the same size. The energy calculation for PEDAMACS and RTXP is done respectively according equations 10 and 8. 

The energy spent by PEDAMACS grows really slightly with the average number of neighbors because the number of nodes increases. Nevertheless the growth is not very important because the number of hops in the network does not change and the number of packets transmitted remains the same (200 alarms are produced). When the load increases it does not affect the energy spent by PEDAMACS because the scheduling remains the same.

The energy spent by RTXP grows linearly with the average number of neighbors. This is due to the fact that data packet are broadcasted to the neighbors of the sender. With the 1 packet per second rate the energy spent is higher than with the 1 packet every 5 seconds rate. The higher the alarm rate is the more \emph{secondary activity periods} are triggered.

PEDAMACS has a higher energy consumption than RTXP for networks with an average number of neighbors below 50. This is due to the fact that with PEDAMACS the nodes are waking up even if there is no traffic as a result of the scheduling. PEDAMACS is thus more suited to a periodic traffic where all nodes have a packet to send during each scheduling fame than to an alarm traffic. RTXP, on the contrary, adapts to the traffic load. If there is no alarms nodes sleep most of the time, if there are lots of alarms, secondary periods are triggered to handle the traffic.

\subsubsection{Log-normal shadowing propagation model.}

\begin{figure}[!h]
    \subfigure[delay]{\includegraphics[width=2.4in, keepaspectratio=true]{./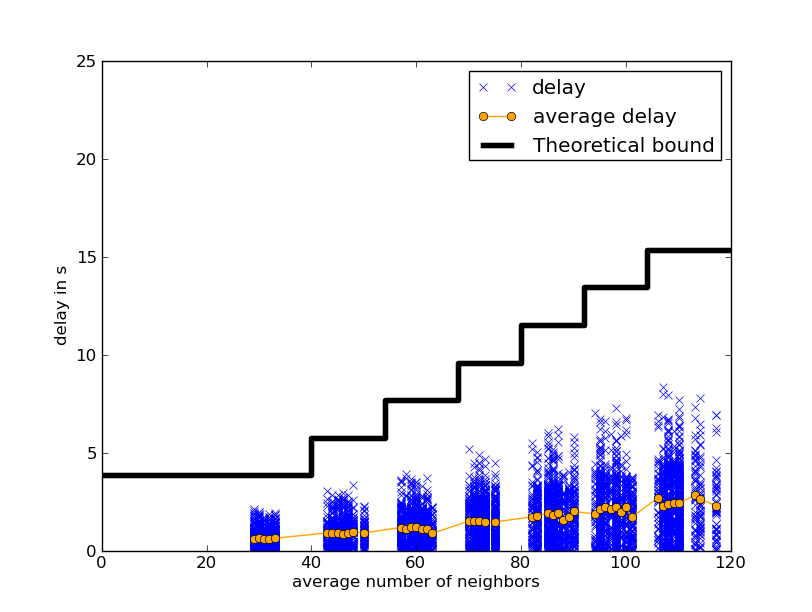}
\label{delay-pln}}
    \hfil
\subfigure[delivery ratio]{\includegraphics[width=2.4in, keepaspectratio=true]{./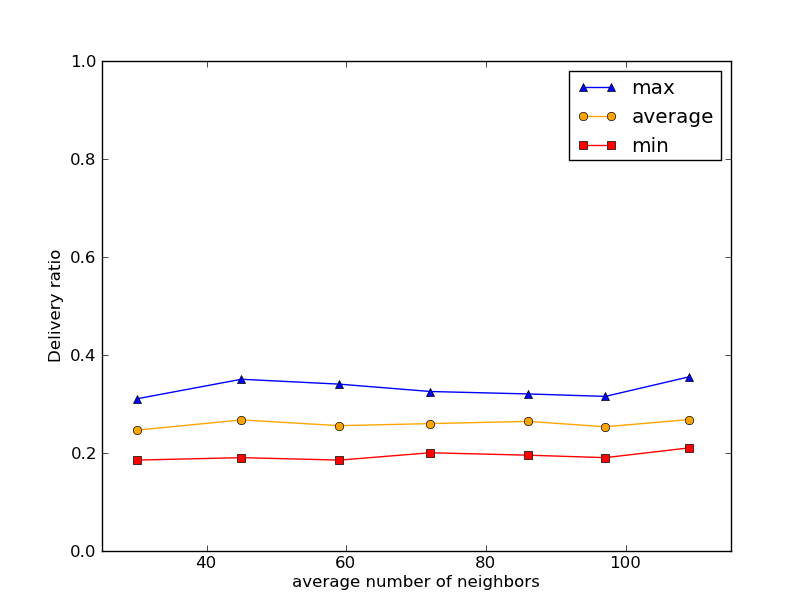}
      \label{dr-p}}
\caption{PEDAMACS - log-normal shadowing}
\end{figure}

Figure \ref{delay-pln} depicts the delay of the packets and the theoretical bound for PEDAMACS in the case of the log-normal shadowing model. In this case as well, no packet miss its deadline. Nevertheless, the Figure \ref{dr-p} represents the minimum, maximum, and average delivery ratios observed during the simulations. The values are very low, most of the packets are lost because of the harsh channel conditions. PEDAMACS does not implement any retransmission mechanism so if a transmission fails the packet is irremediably lost.

\begin{figure}[!h]
    \subfigure[delay]{\includegraphics[width=2.4in, keepaspectratio=true]{./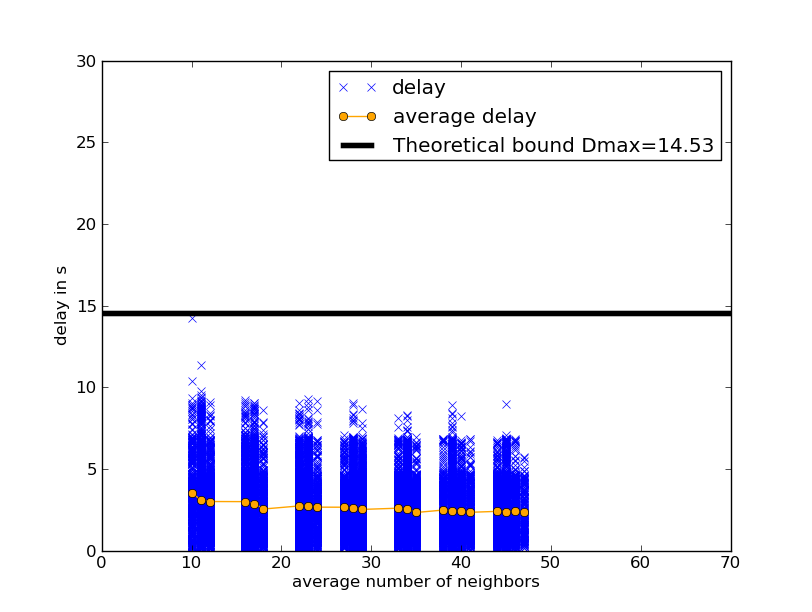}
\label{delay-rln-noret}}
    \hfil
\subfigure[delivery ratio]{\includegraphics[width=2.4in, keepaspectratio=true]{./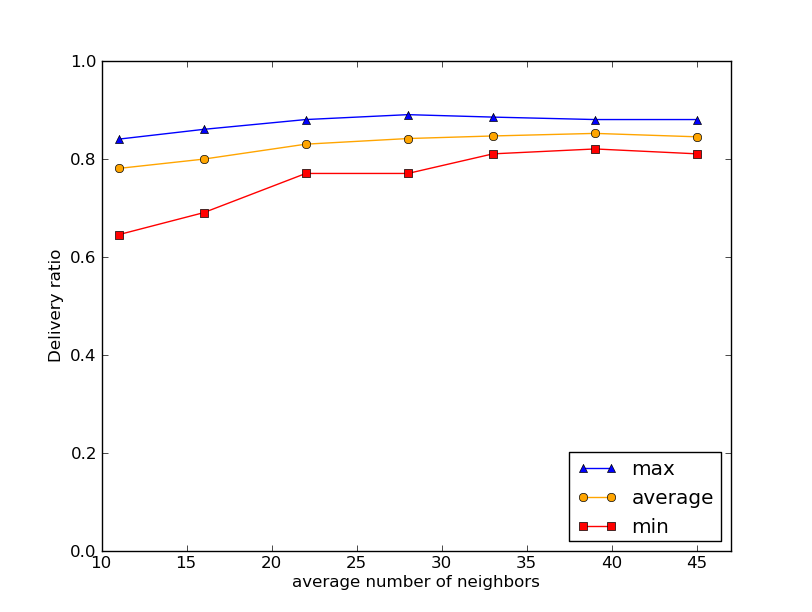}
      \label{dr-r-noret}}
\caption{RTXP - log-normal shadowing without retransmissions}
\end{figure}

Figure \ref{delay-rln-noret} depicts the delay of the packets and the theoretical bound for RTXP in the case of the log-normal shadowing model with no retransmission mechanism. As in the case of PEDAMACS, all the packets meet the deadline. But in this case the delivery ratio (Figure \ref{dr-r-noret}) is higher. This is due to the broadcast based scheme implemented by RTXP which increases the reliability as described in section \ref{rtxp-det}.

\begin{figure}[!h]
    \subfigure[delay]{\includegraphics[width=2.4in, keepaspectratio=true]{./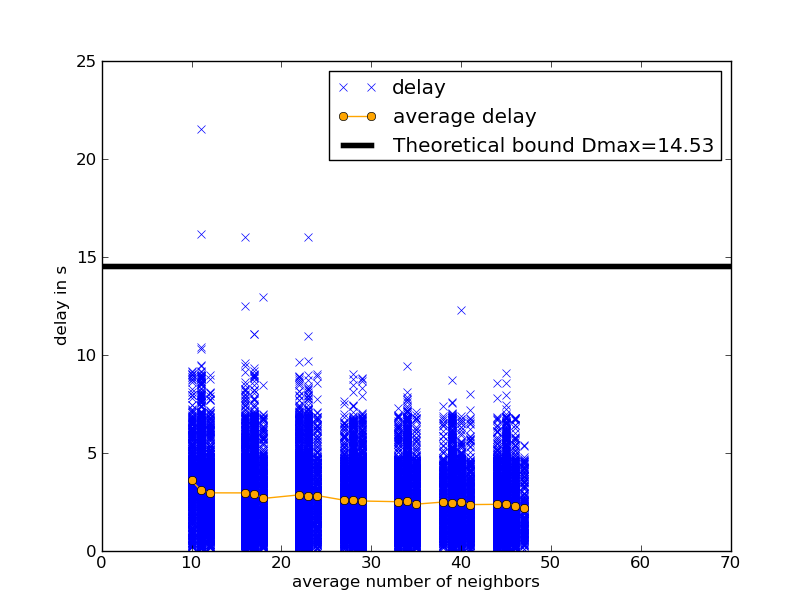}
\label{delay-rln}}
    \hfil
\subfigure[delivery ratio]{\includegraphics[width=2.4in, keepaspectratio=true]{./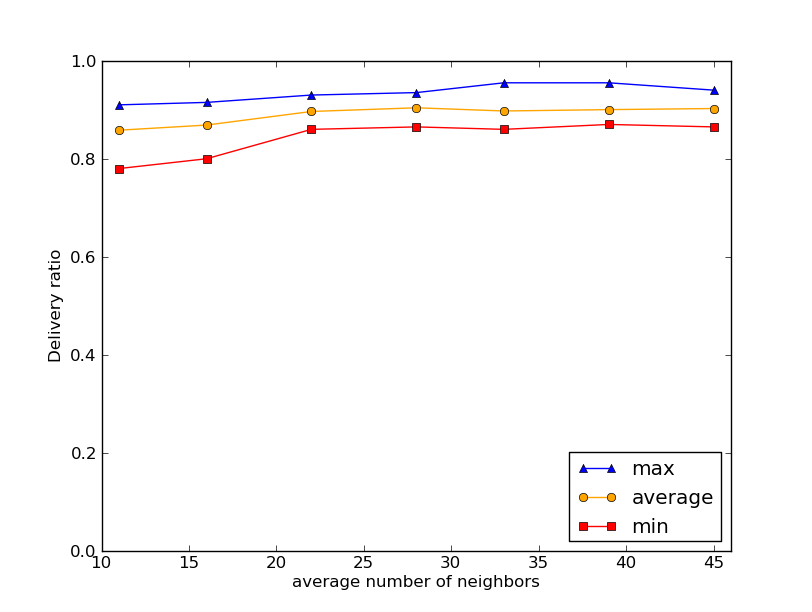}
      \label{dr-r}}
\caption{RTXP - log-normal shadowing with retransmissions}
\end{figure}

Figure \ref{delay-rln} depicts the delay of the packets and the theoretical bound for RTXP in the case of the log-normal shadowing model with retransmissions. In this case we notice that few packets miss the deadline. The retransmission mechanism is described in section \ref{rtxp-det}, if a sender does not detect a jamming code during the Backoff Forward phase, it sends a jamming code in the L slot to trigger a \emph{secondary activity period}. It then retransmits the packet in the new \emph{activity period} (in our implementation a packet can be retransmitted 5 times per duty-cycle, then it is resent in the next duty-cycle). In some cases, there are too many retransmissions so the packet cannot meet the deadline. Nevertheless, the retransmission mechanism allows to have a higher delivery ratio even under harsh channel conditions. Figure \ref{dr-p} represents the minimum and maximum delivery ratios observed during the simulations. The values are higher than those observed in the case of PEDAMACS and it improves the results of RTXP without retransmissions.

Under harsh radio channel conditions it is not possible to ensure that all the packets are received. Neither it is possible to ensure that all packets are received before the deadline. This is due to the probabilistic nature of the radio link, indeed there is a chance that a packet is not correctly received even after many retransmissions. RTXP is designed with the goal of avoiding probabilistic behaviors, channel access and forwarder selection are deterministic, so the behavior is predictable and we can ensure that packets meet their deadline. Nevertheless, the radio channel introduces a probabilistic aspect thus one can legitimately ask if it is worth it to have a deterministic protocol on a probabilistic channel. In the next sections this issue is further investigated. 

\subsection{Comparison with a non real-time solution}
In this section we compare RTXP to a non real-time solution under harsh radio channel conditions in order to verify that having deterministic behavior in the protocol actually improve the real-time performances. We choose to compare RTXP to a XMAC and gradient routing solution \cite{ye04}. XMAC \cite{buettner06} is a preamble MAC protocol as described in section \ref{mac}, but it does not wake up all the neighbors of the sender. The preamble is composed of shorts packets and response slots. Nodes alternately sleep and wake up. When a node wakes up it senses the channel, if it receives a preamble packet and is the destination of the packet it answers in a response slot, otherwise it goes back to sleep. In our case, we use an opportunistic gradient routing scheme, meaning that any node that receives a preamble packet and has a smaller hop-count than the sender can answer and become the forwarder of the current packet. A node that has a packet to send, first senses the channel, if the channel is free it transmits the preamble packets. If it senses activity it backs off for a random duration and retry then. The access to the channel is thus not deterministic.

With the XMAC and gradient protocol, the end-to-end delay depends on the number of hops a packet has to do to reach the sink. A packet has to wait for a preamble length to do one hop (at most one duty-cycle period). In order to fairly compare this solution with RTXP, we take duty-cycle duration of one third of the duty-cycle of RTXP (Because in RTXP a packet can do up to three hops during a single duty-cycle). This choice actually disadvantage RTXP because XMAC preamble lasts half a duty cycle on average. We use equation 6 as the theoretical bound for XMAC with gradient. XMAC define an acknowledgment packet and the number of retransmission can be specified. During the simulations different values are tested in order to monitor the effect retransmissions have on reliability and delay.

The alarm rate used in the simulations is 1 alarm every 5 seconds. The channel model is log-normal propagation model. Figures \ref{no-retry}, \ref{5-retry} and \ref{500-retry} respectively depict the results for 0, 5 and 500 retransmissions. The same parameters as previously are monitored: end-to-end delay and delivery ratio.

\begin{figure}[!h]
    \subfigure[delay]{\includegraphics[width=2.4in, keepaspectratio=true]{./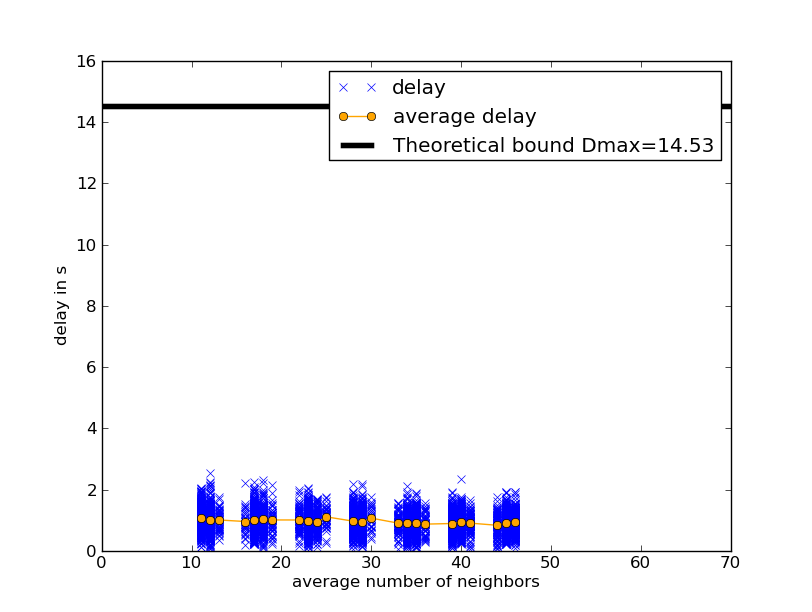}
\label{delay-x0r}}
    \hfil
\subfigure[delivery ratio]{\includegraphics[width=2.4in, keepaspectratio=true]{./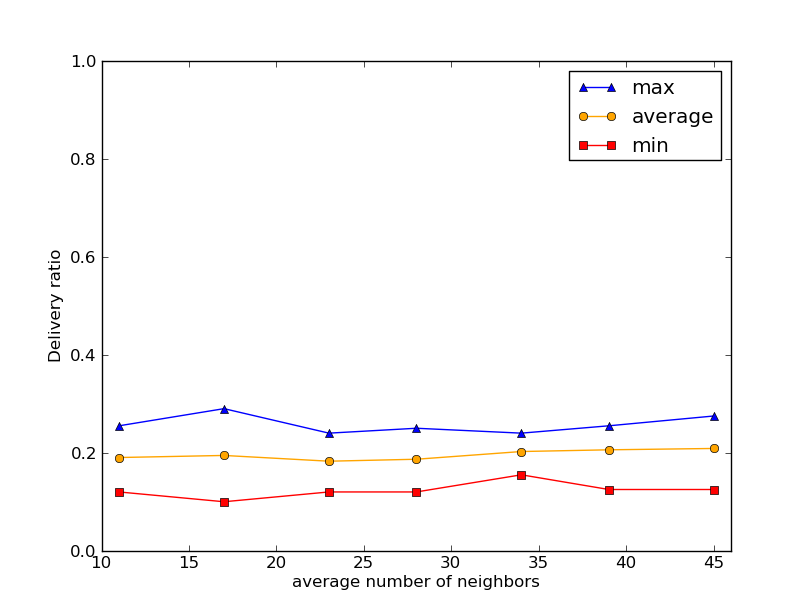}
      \label{dr-x0r}}
\caption{XMAC with gradient: no retransmission}
  \label{no-retry}
\end{figure}

Figure \ref{delay-x0r} shows that every packet, which arrives to the sink, respects the deadline in the case there is no retransmission. Nevertheless, the delivery ratio, shown on Figure \ref{dr-x0r}, is very low compared to the one achieved with RTXP (as shonw on Figure \ref{dr-r-noret}).

\begin{figure}[!h]
    \subfigure[delay]{\includegraphics[width=2.4in, keepaspectratio=true]{./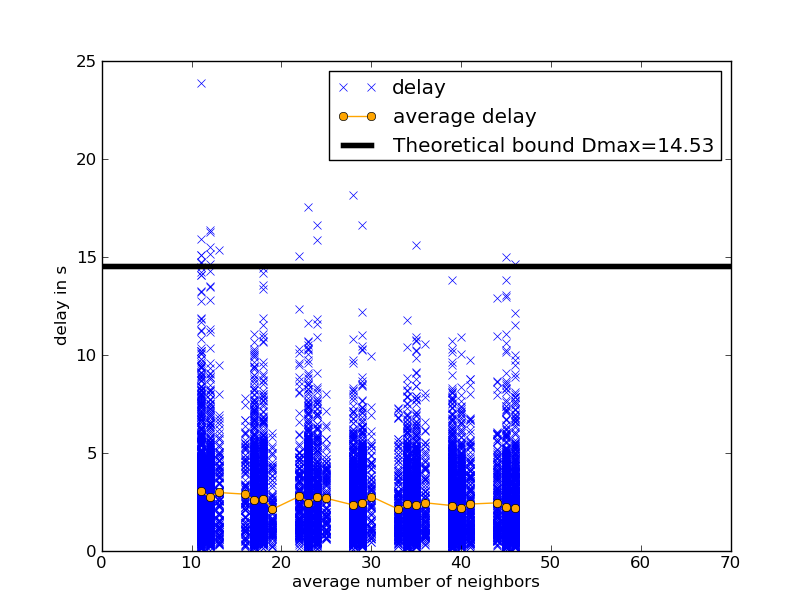}
\label{delay-x5r}}
    \hfil
\subfigure[delivery ratio]{\includegraphics[width=2.4in, keepaspectratio=true]{./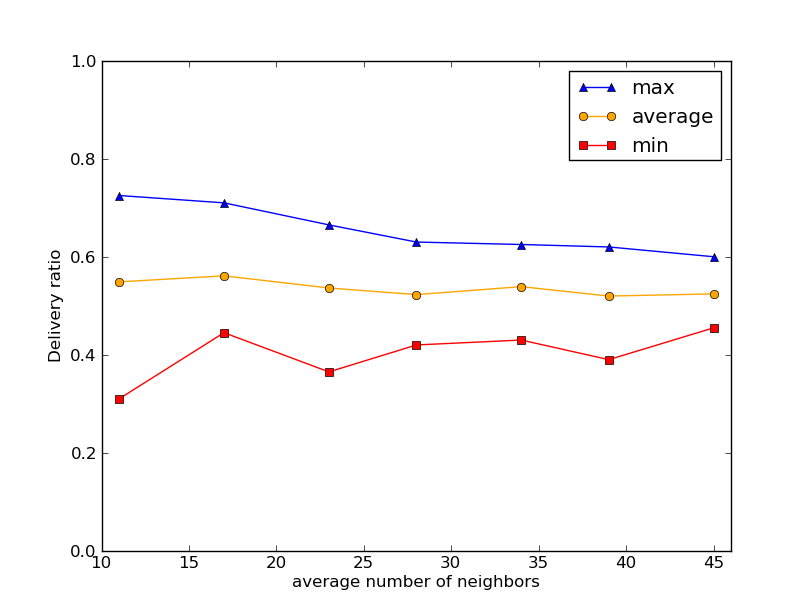}
      \label{dr-x5r}}
\caption{XMAC with gradient: 5 retry}
\label{5-retry}
\end{figure}

In the case of the 5 retransmissions setting, Figure \ref{delay-x5r} shows that some packets miss the deadline. It occurs because the retransmissions increase the end-to-end delay. The delivery ratio, depicted on Figure \ref{dr-x5r} is higher than in the previous case. Nevertheless, the amount of packets that miss the deadline is higher than in the case of RTXP as it can be seen on Figure \ref{delay-rln}. The delivery ratio, represented on Figure \ref{dr-x5r}, is higher than in the previous case because packets have more probabilities to be successfully transmitted when the number of retransmissions increases. Nevertheless, the delivery ratio is still smaller than with RTXP as can be seen on Figure \ref{dr-r}. Moreover the difference between maximum and minimum values of delivery ratio is smaller in the case of RTXP, it is thus more stable. 

\begin{figure}[!h]
    \subfigure[delay]{\includegraphics[width=2.4in, keepaspectratio=true]{./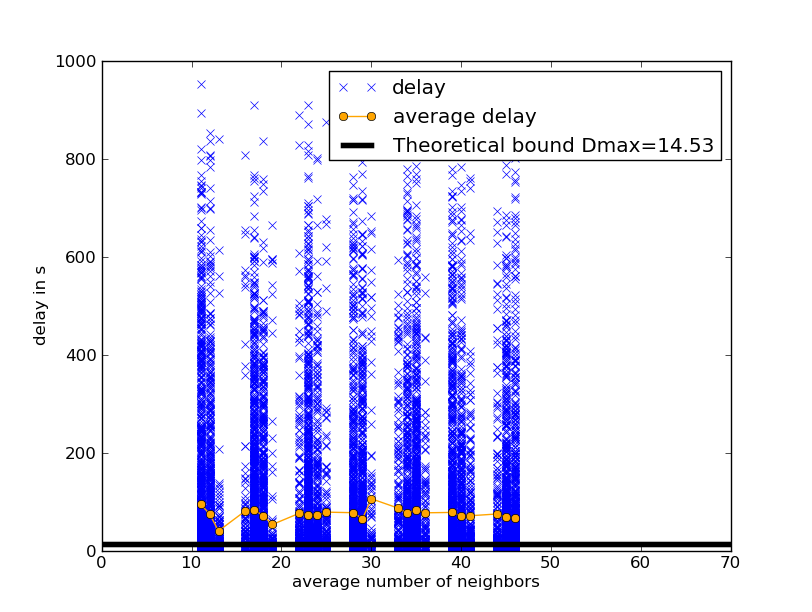}
\label{delay-x500r}}
    \hfil
\subfigure[delivery ratio]{\includegraphics[width=2.4in, keepaspectratio=true]{./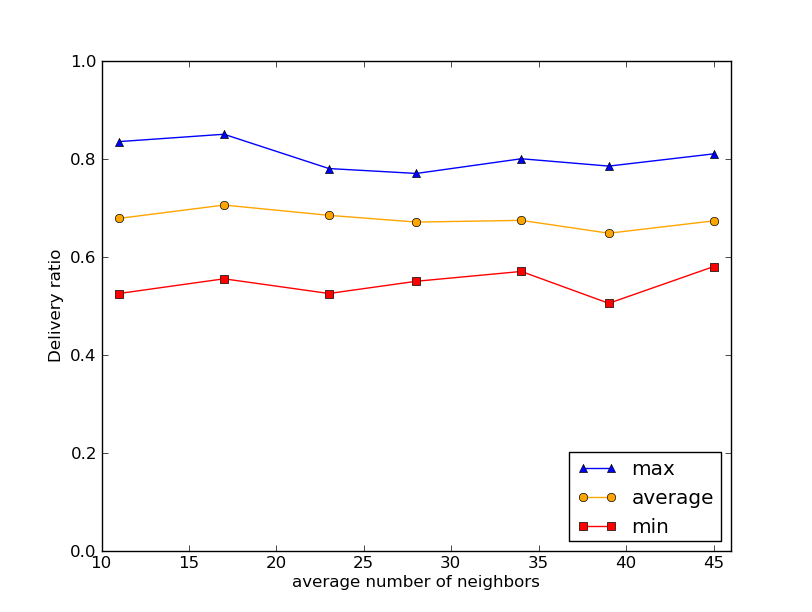}
      \label{dr-x500r}}
\caption{XMAC with gradient: 500 retry}
\label{500-retry}
\end{figure}

In our implementation of RTXP a packet is retransmitted 5 times during one duty-cycle. If it still has not be correctly received, it will be retransmitted during the next duty cycle. This means that in the case of RTXP there is no bound on the number of times a packet can be retransmitted. Thus in order to fairly compare RTXP with XMAC gradient, we choose a very high number of retransmission: 500. As depicted on Figure \ref{delay-x500r}, most of the packets miss the deadline with delays up to several hundreds of seconds. Nevertheless, as shown on Figure \ref{dr-x500r}, it results in a slight increase of the delivery ratio, but it remains below the values of RTXP as can be seen on Figure \ref{dr-r}. These high delays are mostly due to the fact that the high number of retransmissions induces a high occupation of the channel resulting in longer delays to access the channel.

These results show that introducing determinism for channel access and routing leads to better performances even with a probabilistic radio link.

%------------------------------------------------------------------------- 
\section{Conclusion and future work}
In this paper we present RTXP, a solution to handle real-time alarms in WSNs. To the best of our knowledge RTXP is the first localized MAC and Routing protocol for WSN able to guarantee end-to-end delays. We describe the proposition and give its theoretical bound on the end-to-end delay and its real-time capacity. By simulation we compare RTXP and PEDAMACS, a centralized scheduling solution. We show that RTXP is more suited to alarm traffic than PEDAMACS. By simulating the protocols under harsh radio channel conditions we show that it is not possible to give hard guarantees on the delay under unreliable radio link assumptions. Nevertheless by favorably comparing RTXP to a non real-time solution, we demonstrate the usefulness of real-time approaches even with unreliable links.

In this paper we derive the theoretical delay bound and capacity from general statements made in the protocol description. From these statements we also construct simple proofs of properties of RTXP. Nevertheless, to be trusted the protocol must be described in a formal language and verified using a formal verification technique. A future work will be to apply model checking techniques to RTXP. Finally experimentation on real sensors has to be performed in order to verify the performances of our solution.

%% The Appendices part is started with the command \appendix;
%% appendix sections are then done as normal sections
%% \appendix

%% \section{}
%% \label{}

%% References
%%
%% Following citation commands can be used in the body text:
%% Usage of \cite is as follows:
%%   \cite{key}         ==>>  [#]
%%   \cite[chap. 2]{key} ==>> [#, chap. 2]
%%

%% References with bibTeX database:

\bibliographystyle{plain}
\bibliography{RR}
\newpage
\tableofcontents

\end{document}